\newtheorem{theorem}{Theorem}
\newtheorem{remark}{Remark}
\newtheorem*{theorem*}{Theorem}
\begin{document}

\title{Dynamic Edge Caching with Popularity Drifting}

\author{\IEEEauthorblockN{Linqi Song}
\IEEEauthorblockA{Department of Computer Science\\
City University of Hong Kong, Hong Kong SAR}
\and
\IEEEauthorblockN{Jie Xu\\}
\IEEEauthorblockA{Department of Electrical and Computer Engineering\\
University of Miami, USA}}

\maketitle

\begin{abstract}
Caching at the network edge devices such as wireless caching stations (WCS) is a key technology in the 5G network. The spatial-temporal diversity of content popularity requires different content to be cached in different WCSs and periodically updated to adapt to temporal changes. In this paper, we study how the popularity drifting speed affects the number of required broadcast transmissions by the MBS and then design coded transmission schemes by leveraging the broadcast advantage under the index coding framework. The key idea is that files already cached in WCSs, which although may be currently unpopular, can serve as side information to facilitate coded broadcast transmission for cache updating. Our algorithm extends existing index coding-based schemes from a single-request scenario to a multiple-request scenario via a ``dynamic coloring'' approach. Simulation results indicate that a significant bandwidth saving can be achieved by adopting our scheme.
\end{abstract}

\section{Introduction}
Proactively caching popular bulky traffic (e.g. videos) in the network edge devices such as wireless caching stations (WCSs) or cache-enabled small cells is a promising approach to alleviate the backhaul bandwidth burden of the mobile network and reduce content access time \cite{shanmugam2013femtocaching}. Since content popularity among users is, to a certain extent, predictable, popular content can be pre-cached at the WCSs close to users before actual requests arrive. In a common scenario illustrated in Fig.~\ref{system}, WCSs are deployed in a ``drop-and-play'' manner without wired connections along roadside to enhance network capacity while conventional macro base stations (MBSs) provide ubiquitous coverage and control signalling \cite{zhang2017self}.

Content popularity varies both spatially and temporally. On the one hand, WCSs placed in different locations serve different users who may have different preferences over the content. Therefore what content to cache is likely to be different across WCSs. On the other hand, content popularity evolves over time as new content is being produced and hence, caches of the WCSs must be periodically refreshed to adapt to the temporal popularity changes. As more and more WCSs are being deployed at the network edge to provide ubiquitous and fast content access, the spatial-temporal diversity of content popularity begins to impose an increasingly heavy traffic burden on the wireless link between the MBS and the distributed WCSs, taking up precious wireless bandwidth of the network.

\begin{figure}
  \centering
  \includegraphics[width=0.45\textwidth]{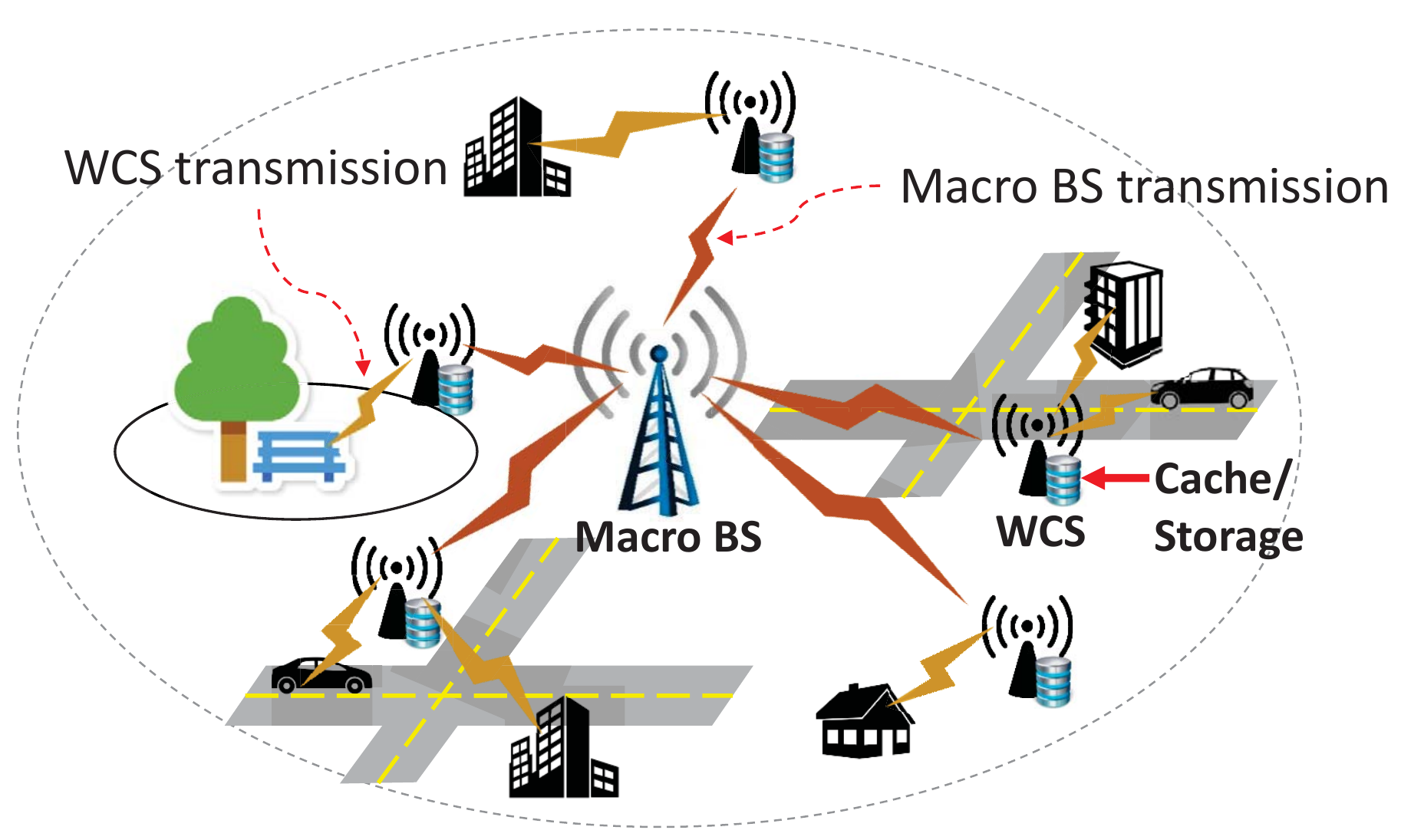}\\
  \caption{Network architecture with wireless caching stations}\label{system}
\end{figure}

In this paper, we study dynamic proactive caching among a network of distributed WCSs and design efficient transmission schemes to minimize the bandwidth usage for pushing popular content into WCSs from the MBS. Specifically, we design coded broadcast transmission schemes under the index coding framework. Our scheme is developed based on an observation: as content popularity evolves over time, the previously cached content, which although may not be popular at the current time, can serve as \textit{side information} to facilitate coded broadcast transmission among several WCSs, thereby saving wireless backhaul bandwidth. The main contributions are as follows: We model the popularity drifting of users as a dynamic process that is characterized by the distance measure between different rankings. The popularity drifting indicates that the preference rankings slightly differ between rounds. We formulate the transmission problem for dynamic proactive caching among a network of WCSs, and show a consistent trend that the number of required transmissions increases with the content popularity drifting speed under both uncoded transmission schemes and coded transmission schemes. For uncoded transmissions, we show that in the worst case, the transmissions needed to refresh the cache is proportional to the drifting speed parameter. We design  MDS codes for proactive caching with side information and characterize its bandwidth savings. We further design the optimal transmission schemes under the framework of index coding.We leverage existing graph-coloring-based index coding schemes and propose improvements tailored to our problem, termed dynamic graph coloring. Furthermore, we show that our proposed coded transmission and caching schemes can save a fraction of $\frac{s}{n\sqrt{c}\log(s)}$ transmissions compared with the uncoded schemes, where $s$ is the cache size, $n$ is the number of WCSs, and $c$ is a drifting parameter.

\section{Related Work}
Caching at the network edge has recently attracted lots of attention. The concept of FemtoCaching was introduced in \cite{shanmugam2013femtocaching} which studies content placement at small cell BSs to minimize the content access delay. Geographical caching was investigated in \cite{blaszczyszyn2015optimal} to maximize the probability of serving a user. Coded caching is an emerging topic recently, especially in wireless broadcast channels \cite{maddah2014fundamental,pedarsani2016online} or Device-to-Device networks \cite{ji2016fundamental}. The main coding techniques used in coded caching stem from the index coding \cite{bar2011index}, but focus more on how to place the cached content. Index coding is shown to be NP-hard to approximate within a constant factor \cite{bar2011index,langberg2011hardness} and various herustics are proposed to realize the codes \cite{chaudhry2008efficient}. In \cite{maddah2014fundamental,pedarsani2016online,ji2016fundamental}, files are separable and the concern is how to place (fractions of) files in the distributed WCSs so that the number of coded transmissions by the MBS is minimized when the actual requests arrive. They seek to uncover the information-theoretic limits of caching. In our problem, we consider a more practical setting where files are not separable and content placement is governed by exogenous content popularity. Our focus is on how to design coded transmission schemes to minimize bandwidth usage given the content distribution pattern and finding the consistent trend of communication cost with respect to the popularity drifting.

\section{System Model}
Consider a wireless network with one macro base station (MBS) and $n$ wireless caching stations (WCS), denoted by the set $[n]\triangleq \{1,2,...,n\}$. The WCSs are distributed over the network and can receive data from the MBS via a wireless broadcast channel. For analytical simplicity, we assume that this broadcast channel is error-free in this paper. Each WCS $i \in [n]$ can proactively cache popular content from the remote server via the MBS, and deliver the content, when requested, to the end users in its wireless transmission range. By offloading the downlink traffic from MBS to the WCSs, which are in close proximity to the end users, proactive caching reduces transmission latency and relieves traffic burdens on the backhaul network. We consider a pool of $m$ files, denoted by $\mathcal{B} = \{b_1, b_2, ..., b_m\}$, at the remote server that can be cached in the WCSs. Without loss of generality, we assume that these files are of the same size. Files of different sizes can be divided into file chunks of equal size to satisfy this assumption. Each WCS has a cache of limited capacity that can store at most $s < m$ files. In some existing theoretical work \cite{maddah2014fundamental,pedarsani2016online,ji2016fundamental}, the WCS may only store parts of a file. However, for practical concerns, such as file management, we consider that the WCS can only cache an entire file. Because not all files can be cached in the WCS, which files to cache will be determined according to the file popularity among the users.

Time is divided into slots. At the beginning of each time slot $t$, each WCS $i$ estimates the popularity of each file among users in its coverage area, which may vary across different WCSs. Due to the limited cache capacity of a WCS, the $s$-most popular files have to be cached to maximize the caching performance. For the purpose of this paper, only the popularity \textit{ranking} over the $m$ files is relevant to our problem. Let $\pi_{t, i}: \mathcal{B} \to \{1,...,m\}$ be a ranking function with respect to WCS $i$ in time slot $t$, where $\pi_{t,i}(b)$ is the position or rank of file $b \in \mathcal{B}$. In addition, let $\mathcal{B}^s_{\pi}$ denote the set of top-$s$ files under a ranking $\pi$.

As file popularity, captured by the popularity ranking $\pi_{t,i}$, varies over time, cached files have to be refreshed at the beginning of every time slot $t$. However, because the already-cached files in the previous time slot $t-1$ may have overlap with the predicted top-$s$ popular files in the current time $t$, not all files need to be downloaded from the remote server via the MBS. These already-cached files in a specific time slot are termed \textit{side information} in the proactive caching problem. Moreover, since different WCSs may have different files cached in the previous time slot $t-1$, coding schemes can be designed to minimize the number of broadcast transmissions, thereby saving the backhaul bandwidth. We then ask how does the number of broadcast transmissions depend on the content popularity drifting speed over time.

\section{Uncoded Transmission for Proactive Caching}
First, we study proactive caching using a straightforward uncoded broadcast transmission. Let $S_{t,i}$ be the set of files cached in WCS $i$ in time slot $t$, which equals $\mathcal{B}^s_{\pi_{t,i}}$. For each WCS $i$, only the files that are in $S_{t,i}$ but not in $S_{t-1,i}$ need to be transmitted by the MBS to WCS $i$ to update its cached content. These files are denoted by $S_{t,i}\backslash S_{t-1,i}\triangleq R_{t,i}$. Since the MBS broadcasts files to all WCSs in the network, the files that need to be broadcasted is $\cup_{i\in [n]} S_{t,i}\backslash S_{t-1,i} \triangleq R_t$, and the total number of broadcast transmissions is $|\cup_{i\in [n]} S_{t,i}\backslash S_{t-1,i}|\triangleq T_{t, un}$. Clearly, the number of required transmissions depends on how fast the popularity ranking changes: if there is a dramatic change in the popularity ranking between consecutive time slots, then it is likely that more transmissions are needed. 

We first introduce some concepts regarding popularity ranking. The dissimilarity between two popularity rankings $\pi_1$ and $\pi_2$ is characterized by their distance, under metrics such as the Spearman footrule distance and the Kendall tau distance \cite{dwork2001rank} among others. In this paper, we adopt the Kendall tau distance metric, which is defined as the number of pair-wise differences between two rankings. This can be seen as a ``bubble sort'' distance, which is the number of pair-wise adjacent transpositions needed to sort one ranking to another. Let $K(\pi_1, \pi_2)$ denote the Kendall tau distance between two popularity rankings, which is formally defined as follows
\begin{align}
&K(\pi_1, \pi_2) = \\
&\left|\{(j_1, j_2) : j_1 \neq j_2, \pi_1(b_{j_1}) < \pi_1(b_{j_2}), \pi_2(b_{j_1}) > \pi_2(b_{j_2})\}\right| \nonumber
\end{align}

\textbf{Example}: \textit{Consider 4 files $\{b_1,b_2,b_3,b_4\}$. Assume that the first popularity ranking is $\pi_1(b_1) = 1$, $\pi_1(b_2) = 2$, $\pi_1(b_3) = 3$, $\pi_1(b_4) = 4$. Hence, file $b_1$ is the most popular. Assume that the second popularity ranking is $\pi_1(b_1) = 3$, $\pi_1(b_2) = 4$, $\pi_1(b_3) = 1$, $\pi_1(b_4) = 2$. Hence, file $b_3$ is the most popular. In order to calculate the Kendall tau distance, pair each file with every other file and count the number of times the values in ranking $\pi_1$ are in the opposite order of the values in ranking $\pi_2$. For instance, for the pair $(b_1, b_2)$, $\pi_1$ and $\pi_2$ are consistent because $\pi_1(b_1) < \pi_1(b_2)$ and $\pi_2(b_1) < \pi_2(b_2)$. However, for the pair $(b_1, b_3)$, the two rankings are inconsistent because $\pi_1(b_1) < \pi_1(b_3)$ whereas $\pi_2(b_1) > \pi_2(b_3)$. Among all six possible pairs, pairs $(b_1, b_3)$, $(b_1, b_4)$, $(b_2, b_3)$, $(b_2, b_4)$ make the two rankings inconsistent. Therefore, the Kendall tau distance between these two rankings is $K(\pi_1, \pi_2) = 4$.}

To characterize the popularity drift over time, we assume that, for two consecutive time slots $t - 1$ and $t$, the popularity ranking differ at most $c$, i.e. $K(\pi_{t-1,i}, \pi_{t, i}) \leq c$ for all time slot $t$ and WCS $i$. Therefore, the constant $c$ sets an upper bound on the speed of popularity drifting. The following theorem characterizes the relationship between the number of required transmissions $T_{un}$ and the popularity drifting speed $c$.

\begin{theorem}
With uncoded transmission, at the beginning of time slot $t$, the system needs
\begin{enumerate}
  \item at least one transmission, if for some WCS $i$, $K(\pi_{t-1,i}, \pi_{t,i}) > \frac{s(s-1)(m-s)(m-s-1)}{4}$.
  \item at most $\min\{n\sqrt{c}, m\}$ transmissions, if for every WCS $i$, $K(\pi_{t-1,i}, \pi_{t,i}) \leq c$.
\end{enumerate}
\end{theorem}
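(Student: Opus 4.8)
The plan is to treat both parts through a common device: classify the $\binom{m}{2}$ unordered file pairs according to whether each file sits in the previous top-$s$ set $S_{t-1,i}$ or in its complement, and count the pairs that are \emph{forced} to be discordant in the sense of the Kendall tau definition. Throughout I fix a single WCS $i$ and write $A = S_{t-1,i}$ and $B = S_{t,i}$, both of size $s$.

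For the second claim I would first prove the per-station bound $|R_{t,i}| \le \sqrt{c}$. Let $k = |R_{t,i}| = |B \setminus A|$. Since $|A| = |B| = s$, exactly $k$ files leave the cache (the set $A \setminus B$) and exactly $k$ files enter it (the set $B \setminus A$). The key observation is that every pair consisting of one entering file $x \in B \setminus A$ and one leaving file $y \in A \setminus B$ is discordant: $x$ ranks outside the top-$s$ under $\pi_{t-1,i}$ but inside it under $\pi_{t,i}$, while $y$ does the reverse, so the two files appear in opposite relative order in the two rankings. There are $k \cdot k = k^2$ such pairs, all distinct since $B\setminus A$ and $A\setminus B$ are disjoint, hence $K(\pi_{t-1,i},\pi_{t,i}) \ge k^2$. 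The hypothesis $K(\pi_{t-1,i},\pi_{t,i}) \le c$ then forces $k \le \sqrt{c}$. Summing over the $n$ stations and using subadditivity of cardinality under union, $T_{t,un} = |\cup_{i\in[n]} R_{t,i}| \le \sum_{i\in[n]} |R_{t,i}| \le n\sqrt{c}$; combining with the trivial bound $T_{t,un} \le |\mathcal{B}| = m$ gives the stated $\min\{n\sqrt{c}, m\}$.

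For the first claim I would argue the contrapositive for the offending station $i$: if WCS $i$ needs no transmission, i.e. $A = B$, then its Kendall tau distance cannot be large. When the top-$s$ set is preserved, any pair straddling the rank-$s$ boundary is concordant, because a top-$s$ file outranks a bottom file in \emph{both} rankings; therefore all discordant pairs lie either entirely within the top-$s$ block or entirely within the bottom-$(m-s)$ block, giving at most $\binom{s}{2} + \binom{m-s}{2}$ of them. Thus $A = B$ implies $K(\pi_{t-1,i},\pi_{t,i}) \le \binom{s}{2} + \binom{m-s}{2}$, and taking the contrapositive, once $K$ exceeds this quantity the set $B$ must differ from $A$, so $R_{t,i} \neq \emptyset$ and at least one broadcast is required.

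The step I expect to require the most care is reconciling the tight no-transmission bound $\binom{s}{2} + \binom{m-s}{2}$ with the threshold $\frac{s(s-1)(m-s)(m-s-1)}{4} = \binom{s}{2}\binom{m-s}{2}$ stated in the theorem. The product dominates the sum precisely when $\binom{s}{2} \ge 2$ and $\binom{m-s}{2} \ge 2$, i.e. for $s \ge 3$ and $m - s \ge 3$, which is the regime of practical interest; in that range $K > \binom{s}{2}\binom{m-s}{2} \ge \binom{s}{2} + \binom{m-s}{2}$ already rules out $A = B$, so the stated threshold is a valid (though loose) sufficient condition and the contrapositive argument goes through verbatim. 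The only other point needing attention is confirming that the $k^2$ cross-block inversions counted in the second claim are distinct from any within-block inversions, which is immediate since the two families involve disjoint membership patterns and thus add rather than overlap.
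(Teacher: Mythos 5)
Your proof is correct, and for the second claim it coincides with the paper's argument: the paper likewise counts the $|\Delta_1||\Delta_2| = k^2$ cross inversions between the entering set $\Delta_1 = S_{t,i}\setminus S_{t-1,i}$ and the leaving set $\Delta_2 = S_{t-1,i}\setminus S_{t,i}$ to conclude $k \le \sqrt{c}$ per station, then sums over stations and caps at $m$. For the first claim you follow the same contrapositive skeleton, but you are sharper exactly where the paper is weakest: the paper merely asserts, without derivation, that $S_{t,i} = S_{t-1,i}$ forces $K(\pi_{t-1,i},\pi_{t,i}) \le \binom{s}{2}\binom{m-s}{2}$, whereas you prove the tight statement --- under a preserved top-$s$ set every boundary-straddling pair is concordant, so the distance is at most $\binom{s}{2}+\binom{m-s}{2}$ --- and then verify that the theorem's product threshold dominates this sum precisely when $s\ge 3$ and $m-s\ge 3$. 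That reconciliation step is due diligence the paper skips, and it exposes a genuine edge case the paper overlooks: for $s=2$ one can swap the two cached files and reverse the tail, giving $K = 1+\binom{m-2}{2} > \binom{s}{2}\binom{m-s}{2}$ while $S_{t,i}=S_{t-1,i}$, so no transmission is needed even though the stated threshold is exceeded; the same failure occurs for $m-s=2$. In short, your route is the paper's route executed more carefully; what it buys is a tight no-transmission bound and an explicit delimitation of the regime in which the theorem's threshold, as literally stated, is sound.
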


\begin{proof}
We first prove the first part of the theorem. Since using uncoded transmission scheme, for some node $i \in [n]$, if $S_{t,i}\backslash S_{t-1,i} \not= \emptyset$, then the system will need at least one transmission. Observe that if the two rankings $\pi_{t,i}$ and $\pi_{t-1,i}$ have the same set of top $s$ ranked files, i.e., $S_{t,i} = S_{-1t,i}$, then their distance can be at most ${s \choose 2}{m-s \choose 2}$. Therefore, a sufficient condition for $S_{t,i}\backslash S_{t-1,i} \not= \emptyset$ is that the distance between the two rankings $\pi_{t,i}$ and $\pi_{t-1,i}$ exceeds $\frac{s(s-1)(m-s)(m-s-1)}{4}$.

Next, we prove the second half of the theorem. It suffices to show that $|S_{t,i}\backslash S_{t-1,i}| \le \sqrt{c}$. Let us denote by $\Delta_1$ the set $S_{t,i}\backslash S_{t-1,i}$ and by $\Delta_2$ the set $S_{t-1,i}\backslash S_{t,i}$. Note that $|\Delta_1| = |\Delta_2|$, then it is not hard to see that the files indexed by $\Delta_1$ are ranked higher than the files indexed by $\Delta_2$ according to the ranking $\pi_{t,i}$, but the files indexed by $\Delta_1$ are ranked lower than the files indexed by $\Delta_2$ according to the ranking $\pi_{t-1,i}$. Therefore, the Kendall tau distance $K(\pi_{t,i},\pi_{t-1,i})$ is at least $|\Delta_1||\Delta_2| = |\Delta_1|^2 \le c$, indicating that $|\Delta_1| = |\Delta_2| \le \sqrt{c}$.
\end{proof}

From Theorem 1, we can see that \textit{in the worst case}, the number of broadcast transmissions needed may still be proportional to the number of WCSs due to the diversity in the files already cached in the WCSs. When there is a large number of WCSs, proactive caching consumes a significant amount of wireless backbone bandwidth. 

\section{Coded Transmissions for Proactive Caching}
We study the bandwidth-drifting relationship for coded broadcast transmission for proactive caching update, we first show this relationship for the Maximum Distance Separable (MDS) code and then for the index code.

\subsection{MDS Coding based Proactive Caching}
Using a $[\nu, \kappa]$ MDS code, we can encode $\kappa$ original files $b_1, ..., b_\kappa$ into $\nu$ encoded files $x_1, ..., x_\nu$, then we can decode the original $\kappa$ files by receiving \textit{any} $\kappa$ encoded files among the $\nu$ encoded ones.

Denote by $S^\dagger_{t-1,i}$ the set of cached files of WCS $i$ in time slot $t-1$ that are also in the request file set $R_t$, i.e., $S^\dagger_{t-1,i} = R _t\cap S_{t-1,i}$. The following theorem characterizes the number of broadcast transmissions needed to refresh the cache at the beginning of time slot $t$ using the MDS code.

\begin{theorem}
\label{thm:mds}
With MDS codes, the number of broadcast transmissions needed to refresh the cached content at the beginning of time round $t$ is at most $T_{t, un} - \min_{i \in [n]}\{|S^\dagger_{t-1,i}|\}$.
\end{theorem}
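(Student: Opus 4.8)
The plan is to run a single $[\nu,\kappa]$ MDS code over the global request set and to exploit the fact that each station already caches part of that set. Concretely, I would take $\kappa = |R_t| = T_{t,un}$ and regard the files of $R_t$ as the message symbols $u_1,\dots,u_\kappa$, encoding them through a generator matrix $G\in\mathbb{F}^{\kappa\times\nu}$ over a field large enough that any $\kappa$ of its columns are linearly independent. The MBS broadcasts a chosen set $J$ of $\ell=|J|$ coded symbols $\{x_j=\sum_{k}G_{kj}u_k:j\in J\}$, each of which occupies one file's worth of bandwidth and thus counts as one transmission. The whole claim then reduces to showing that $\ell=T_{t,un}-\min_{i\in[n]}\{|S^\dagger_{t-1,i}|\}$ coded symbols let \emph{every} WCS reconstruct all of $R_t$.

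The key step --- and the main obstacle --- is a simultaneous side-information decoding property: a single broadcast of $\ell$ coded symbols must be decodable by every station, even though the stations hold different amounts and different subsets of side information. I would establish this as follows. A WCS $i$ holds exactly the files of $S^\dagger_{t-1,i}=R_t\cap S_{t-1,i}$ among the messages, say a subset $D_i$ with $d_i=|S^\dagger_{t-1,i}|$; subtracting their known contributions turns each received symbol into a known linear combination $x_j-\sum_{k\in D_i}G_{kj}u_k=\sum_{k\notin D_i}G_{kj}u_k$ of the $\kappa-d_i$ unknown messages. The resulting coefficient matrix is the submatrix of $G$ on rows $[\kappa]\setminus D_i$ and columns $J$, of size $(\kappa-d_i)\times\ell$; the unknowns are uniquely determined precisely when these $\kappa-d_i$ rows are linearly independent. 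To guarantee this for \emph{all} side-information patterns at once, I would choose the broadcast columns so that the $\kappa\times\ell$ matrix $G[\cdot,J]$ has every $r\le\ell$ of its rows linearly independent, which is exactly the defining property of an MDS code and is realizable with a Cauchy- or Vandermonde-type generator over a sufficiently large field. Since $\kappa-d_i\le\ell$ whenever $d_i\ge\kappa-\ell$, the relevant rows are then independent and station $i$ decodes.

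It remains to choose $\ell$ and collect the bookkeeping. Setting
\begin{equation}
\ell \;=\; \max_{i\in[n]}\bigl(T_{t,un}-|S^\dagger_{t-1,i}|\bigr)\;=\;T_{t,un}-\min_{i\in[n]}\{|S^\dagger_{t-1,i}|\}
\end{equation}
ensures $d_i=|S^\dagger_{t-1,i}|\ge\kappa-\ell=T_{t,un}-\ell$ for every $i$, so by the property above every station recovers all of $R_t$. In particular each WCS $i$ obtains its genuinely needed files $R_{t,i}=S_{t,i}\setminus S_{t-1,i}\subseteq R_t$, and together with the files it already caches this reconstructs the updated cache $S_{t,i}$. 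This exhibits a coded scheme using $T_{t,un}-\min_{i\in[n]}\{|S^\dagger_{t-1,i}|\}$ broadcasts and hence proves the stated upper bound.
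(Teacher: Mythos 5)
Your proposal is correct and is essentially the paper's own proof: both schemes broadcast $T_{t,un}-\min_{i\in[n]}\{|S^\dagger_{t-1,i}|\}$ linear combinations of the files in $R_t$ whose coefficient matrix has every set of at most $T$ columns linearly independent (the generator matrix of a $[T_{t,un},T]$ MDS code --- your $\ell\times\kappa$ matrix $G[\cdot,J]^{T}$ is exactly this object), after which each WCS cancels its side information $S^\dagger_{t-1,i}$ and uniquely solves the remaining system because its $\kappa-d_i\le\ell$ unknown columns are linearly independent. The only cosmetic difference is your detour through a longer $[\nu,\kappa]$ code with a selected column set $J$; just note that the decodability property you invoke is the row-independence (superregularity, e.g.\ via a Cauchy matrix) of the column-submatrix, i.e.\ the MDS property of the transposed matrix, rather than literally the column-MDS property of $G$ itself.
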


\begin{proof}
We will use a constructive proof method by designing the MDS coding scheme to refresh the cached content. We consider the following general encoded broadcast transmission scheme for time round $t$.
\begin{equation}
\left[
\begin{array}{cccc}
a_{11} & a_{12} & \ldots & a_{1m} \\
a_{21} & a_{22} & \ldots & a_{2m} \\
\vdots & \vdots & \ddots & \vdots \\
a_{T1} & a_{T2} & \ldots & a_{Tm}
\end{array}
\right]
\left[
\begin{array}{c}
b_{1} \\
b_{2} \\
\vdots \\
b_{m}
\end{array}
\right]
=
\left[
\begin{array}{c}
x_{1} \\
x_{2} \\
\vdots \\
x_{T}
\end{array}
\right],
\end{equation}
where $A=\{a_{\tau j}\}$ is the coding coefficient matrix; $T$ is the number of broadcast transmissions; and the $\tau$-th transmission is $x_\tau = a_{\tau 1}b_1 + a_{\tau 2} b_2 +\ldots + a_{\tau m} b_m$. This can also be written in the matrix form as $A\bm b =\bm x$, where $\bm b$ collects all the original files and $\bm x$ collects all the encoded transmissions. Obviously, we only need to transmit the files in $R_t$. Therefore, we can set the coding coefficients to $0$ corresponding to files $j \in [m] \backslash R_t$ without losing any transmission efficiency. This is equivalently to design a coding coefficient matrix $A^\dagger$ with only columns of $A$ corresponding to files in $R_t$. Thus we can write the encoding process as $A^\dagger \bm b^\dagger =\bm x$, where $\bm b^\dagger$ collects the original files indexed by $R_t$.

Now, we select the coding coefficient matrix $A^\dagger$ with $T = T_{un} - \min_{i \in [n]}\{|S^\dagger_{t-1,i}|\}$ such that any $T$ columns of all the $T_{un}$ columns are linearly independent. This can be obtained by the generator matrix of a $[T_{un},T]$ MDS code. After the coefficient matrix $A^{\dagger}$ is designed, it is commonly among the server and all caching stations. For WCS $i$, it can remove from the transmissions the part corresponds to files $S^\dagger_{t-1,i}$, i.e., for the $\tau$'s transmission, $x'_\tau \triangleq x_\tau - \sum_{j \in S^\dagger_{t-1,i}} a_{\tau j}b_j= \sum_{j\in R\backslash S^\dagger_{t-1,i}} a_{\tau j}b_j$. Therefore, the WCS $i$ knows the vector $\bm x'$ that collects all $x'_\tau$ and a matrix $A^{\dagger}_i \in \mathbb{F}^{T \times (T_{un}-|S^\dagger_{t-1,i}|)}_q$ that collects all columns corresponding to files in $R_t\backslash S^\dagger_{t-1,i}$; and then needs to solve the equation $A^{\dagger}_i \bm b^\dagger = \bm x'$ to get $\bm b^\dagger$. By our design of the transmission scheme, we have that any $T$ columns of the matrix $A^\dagger$ are linearly independent, and thus, having any $T_{un}-|S^\dagger_{t-1,i}| \le T_{un} - \min_{i \in [n]}\{|S^\dagger_{t-1,i}| =T$ columns linearly independent. Therefore, caching node $i$ can solve the equation $A^{\dagger}_i \bm b^\dagger = \bm x'$ (note that the variable is $\bm b^\dagger$ and the constant is $\bm x'$) to get a unique solution of $\bm b^\dagger$.
This is the case for all $i$ and then the cached content can be refreshed using at most $T_{un} - \min_{i \in [n]}\{|S^\dagger_{t-1,i}|\}$ number of broadcast transmissions.
\end{proof}

Theorem 2 shows that we can save at least a number of $\min_{i \in [n]}\{|S^\dagger_{t-1,i}|\}$ broadcast transmissions by using the MDS coding scheme compared to the uncoded transmission scheme. In particular, $\min_{i \in [n]}\{|S^\dagger_{t-1,i}|\}$ is bigger if the side information diversity is larger and hence, more savings can be achieved.

\subsection{Index Coding Based Proactive Caching}
We cast this problem as an index coding problem with side information where the side information is the already-cached files. In our problem, one feature is that each WCSs in each time slot may request {\emph {multiple}} files whereas in the conventional index coding problem, the schemes are designed often for single request. In this sense, we need to find algorithms that are efficient for multiple requests in order to achieve higher bandwidth efficiency.

The index coding problem has been shown to be NP-hard \cite{bar2011index}. The literature has shown that the index coding problem is hard to approximate within a constant ratio \cite{langberg2011hardness} and the existing algorithms are heuristics with either no theoretical bound of the approximation ratio or very loose upper bound \cite{chaudhry2008efficient,blasiak2010index}. In \cite{bar2011index}, the idea that the optimal linear index coding is upper bounded by the chromatic number of specifically defined ``conflict'' graph provides a good thread for designing index coding algorithms based on graph coloring. In this paper, we design our algorithms using graph theory based approach. 

Recall that we can reduce the multiple request case as multiple WCSs with single request who have the same side information \cite{bar2011index}. We explore the standard greedy coloring heuristics to find the chromatic number of the conflict graph for the obtained single-request index coding problem. The conflict graph $G = (V, E)$ \cite{bar2011index}, for this reduced single-request index coding problem is constructed as follows. Each vertex on this graph represents a virtual WCS, namely a WCS with one requested file. Therefore, there are totally $\sum_{i\in [n]} |R_{i,t}|$ vertices. Consider any two vertices $v_1$ and $v_2$, where $v_1 = (i_1, b_{j_1})$ represents WCS $i_1$ requesting file $b_{j_1}$ and $v_2 = (i_2, b_{j_2})$ represents WCS $i_2$ requesting file $b_{j_2}$. There is an edge between $v_1$ and $v_2$ if and only if $j_1\neq j_2$ and $j_1 \not\in R_{j_2, t} \lor j_2 \not\in R_{j_1, t}$.

To construct the broadcasting transmission scheme, we perform coloring on the conflict graph $G$. Each color will then correspond to a coded broadcast transmission. Indeed, it is not hard to see that if two vertices $v_1 = (i_1, b_{j_1})$ and $v_2 = (i_2, b_{j_2})$ have the same color, then either $j_1 = j_2$ or $j_1 \in R_{i_2,t} \land j_2 \in R_{i_1}$. Therefore, by transmitting either $b_{j_1} = b_{j_2}$ (for the case $j_1 = j_2$) or $b_{j_1} + b_{j_2}$ (for the case $j_1 \in R_{i_2,t} \land j_2 \in R_{i_1}$), WCS $i_1$ (or $i_2$) can decode $j_1$ (or $j_2$). We consider a greedy coloring method. Given an order of vertices $v_1, v_2, ..., v_{|V|}$ of a graph $G$, the greedy coloring operates across vertices: assign to vertex $v_1$ color 1; assign to vertex $v_2$ color 1 if vertex $v_2$ is not connected with vertex $v_2$ and color 2 otherwise; for the remaining vertex $v = v_3, ..., v_{|V|}$, assign the first available color. Let $order(v)$ be the ordered number of vertex $v$. In particular, there are two commonly used heuristic ordering methods.
\begin{itemize}
  \item \textbf{Random ordering}. Vertices are randomly ordered.
  \item \textbf{Degeneracy ordering}. Repeatedly removing a vertex of minimum degree in the remaining subgraph. The later removed vertex is ordered with a smaller number.
\end{itemize}

We next propose an improvement of the coloring algorithm, termed {\emph {dynamic coloring}}. In particular, the conflict graph will vary during the coloring process by deleting some of the existing edges based on the fact that a successfully decoded file by a WCS can be used as additional side information for later broadcast transmission in the same time slot. This results in a coloring scheme that may not be a proper coloring of the original graph but is sufficient for the transmissions to satisfy all WCS's requests.

The algorithm works as follows. Initially, we construct a conflict graph $G = (V, E)$ and order the vertices in the same way as in the single request approach. Given an ordering of vertices $v_1, v_2, ..., v_{|V|}$ of graph $G$, the dynamic greedy coloring operates across vertices as follows.

1) We start with a graph $G_1 = G$. Assign to the first vertex $v_1 = (i_1, j_1)$ color 1.

2) Consider the subgraph $G'_1$ induced by removing $v_1$ (and associated edges) from $G_1$. In addition, add file $j_1$ into WCS $i_1$'s side information set $S_{i_1}$. Note that although $v_1$ is removed from the graph, there may be other vertices representing $i_1$ (with different requested files).

3) Update $G'_1$ to a new conflict graph $G_2$ by removing edges due to the expanded side information set $S_{i_1}$. In particular, it is sufficient to check edges between vertices $v' = (i_1, j')$ and $v'' = (i'', j_1)$, namely vertices with either common WCS or common requested file with $v_1$. If file $j'$ is in the side information set of $i''$, then we can remove the edge between vertices $v'$ and $v''$.

4) Assign the first available color to vertex $v_2 = (i_2, j_2)$ and repeat the process as in steps 1), 2) and 3) by adding file $j_2$ to WCS $i_2$'s side information set $S_{i_2}$ and update the remaining graph. Then color all remaining vertices by repeating the process for vertices $v_1$ and $v_2$, until all vertices are colored.

We add the following remarks on the performance of the proposed algorithm.
\begin{remark}
Given an order of vertices $v_1, ..., v_{|V|}$ of a conflict graph $G$, the number of transmissions is at most $d+1$, where $d = \max\{d_v\}$ and $d_v = |(v, w) \in E(G), \text{where $order(w) < order(v)$}|$. This follows the fact that when we color vertex $v$, there are at most $d_v$ of its neighbors that have been colored so far.
\end{remark}

\begin{remark} Given the same ordering of vertices, the proposed dynamic coloring method often performs better than the simple reduction method. It is not hard to see that in our proposed dynamic graph coloring scheme, we may not necessary achieve a proper coloring for the original conflict graph. Indeed, we notice that two vertices corresponding to the same client with different requests are always connected, but the edge between two vertices corresponding to different clients may disappear. When we make broadcast transmissions, we still encode the files corresponding to the same color as a transmission. Therefore, if client $i$ can decode some file $j$ during the transmission process, let us say corresponding to color $k_1$, then this file $j$ can be put into $i$'s side information set and create coding opportunity for future transmissions, for example, corresponding to some color $k_2 > k_1$.
\end{remark}

\begin{figure}
  \centering
  \includegraphics[width=0.5\textwidth]{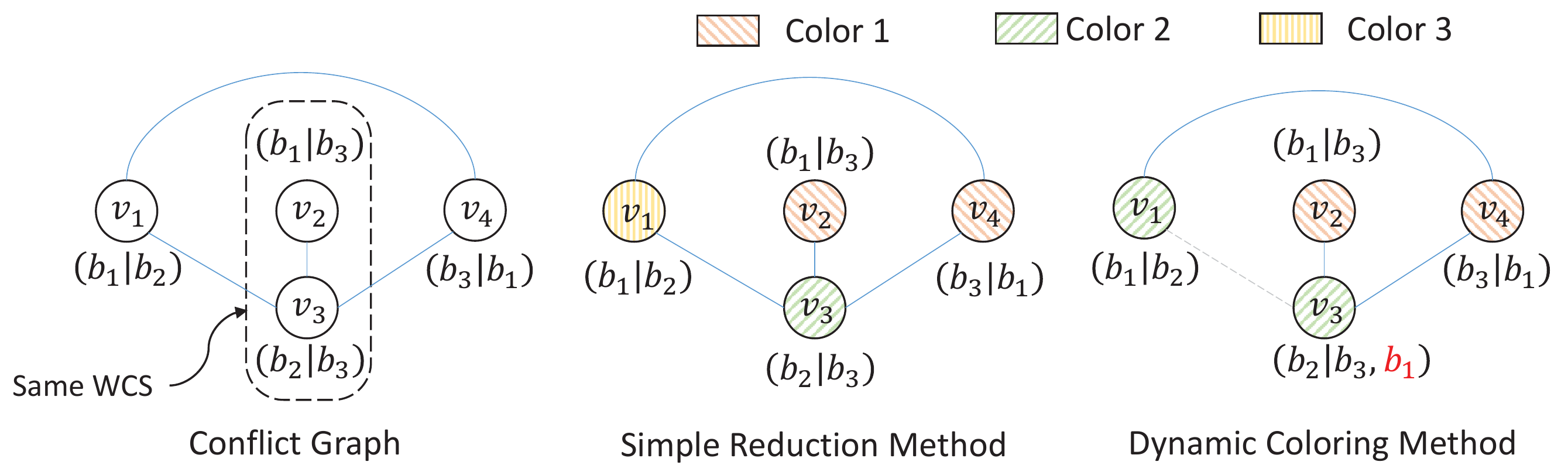}\\
  \caption{Illustration of dynamic coloring}\label{dyncolor}
\end{figure}

\textbf{Example}: \textit{To illustrate the idea of dynamic coloring and the difference between the reduction method, here we provide a simple example. Consider three WCSs and three files $\{b_1, b_2, b_3\}$. WCS 1 requests $b_1$ and has side information $b_2$; WCS 2 requests $b_1, b_2$ and has side information $b_3$; WCS 3 requests $b_3$ and has side information $b_1$. Using the graph coloring method, a conflict graph is constructed where $v_1$ represents WCS 1, $v_2$ and $v_3$ represent WCS 2, and $v_4$ represents WCS 3. The corresponding file requests and side information are annotated in the figure using the $(\cdot|\cdot)$ notation where the first entry is the requested files and the second entry is the side information. We consider a coloring order $v_2 \to v_3 \to v_4 \to v_1$. For the simple reduction method, the coloring result is illustrated in the middle figure in Fig. \ref{dyncolor}, which requires three colors. In the actual transmission phase, the MBS first broadcasts $b_1 \oplus b_3$. With their own side information, WCS 1 and WCS 3 can then obtain files  $b_3$ and $b_1$, respectively. Next the MBS broadcasts $b_2$ so WCS 2 obtains file $b_2$. Finally the MBS broadcasts $b_1$ so WCS 1 obtains file $b_1$. In fact, the simple reduction method does not save bandwidth compared to broadcasting each of the three files in three broadcast transmissions. Now consider the dynamic coloring method. We first assign color 1 to $v_2$. Then file $b_1$ is added into $v_3$'s side information set because $v_2$ and $v_3$ actually represents the same WCS. Due to this change, the edge between $v_1$ and $v_3$ is removed because now their requested files are in each other's side information set. We continue the coloring procedure and will eventually have color 1 assigned to $v_4$ and color 2 assigned to $v_3$ and $v_1$. As can been seen, the dynamic coloring method requires only two colors and hence two broadcast transmissions by the MBS. In the first transmission, the MBS broadcasts $b_1 \oplus b_3$. In the second transmission, the MBS broadcasts $b_1 \oplus b_2$.}

We have the following theorem to characterize the worst case performance of using index coding.
\begin{theorem}
If users coming to WCSs have randomly and independently distributed preference rankings across WCSs at the initial round and their preference rankings evolve independently and randomly (according to some drifting speed $c$) over rounds, then with high probability (i.e., $1-o(1)$, as $m,n,s$ tend to infinity), the number of transmissions achieved by the greedy coloring method at each round is upper bounded by $n\sqrt{c}(1-\frac{s}{n\sqrt{c}\log(s)})$.
\end{theorem}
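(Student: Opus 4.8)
The plan is to argue entirely inside the stated random model and to reduce the theorem to a counting estimate for how many vertices greedy coloring manages to \emph{merge}. First I would fix a round $t$ and record the facts that hold for every realization. By Theorem~1, $|R_{t,i}|\le\sqrt{c}$ for each WCS $i$, so the conflict graph $G=(V,E)$ has $N\triangleq\sum_{i\in[n]}|R_{t,i}|\le n\sqrt{c}$ vertices and greedy coloring uses at most $N$ colors. Writing the number of colors as $N$ minus the number of reused colors, the claim becomes: with probability $1-o(1)$, greedy coloring saves at least $\tfrac{s}{\log s}$ colors, so that the color count is at most $N-\tfrac{s}{\log s}\le n\sqrt{c}-\tfrac{s}{\log s}$.

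Next I would convert the independent-ranking hypothesis into structure on $G$. Under independent uniform rankings drifting by at most $c$, the previous caches $S_{t-1,i}=\mathcal{B}^s_{\pi_{t-1,i}}$ behave like independent uniform $s$-subsets of $\mathcal{B}$ and the request sets $R_{t,i}$ like small random sets. Two facts are needed. First, a birthday-type union bound (using $m\to\infty$) shows the requested files are distinct across WCSs with probability $1-o(1)$, so $T_{t,un}=|R_t|\approx n\sqrt{c}$ and every non-edge of $G$ is a genuine coding opportunity rather than a repeated-file coincidence. Second, for vertices $v_1=(i_1,b_{j_1})$, $v_2=(i_2,b_{j_2})$ with $j_1\neq j_2$, the probability that they are \emph{codable} (non-adjacent) is $q\triangleq\Pr[\,b_{j_1}\in S_{t-1,i_2}\text{ and }b_{j_2}\in S_{t-1,i_1}\,]\approx(s/m)^2$, since each containment has probability $\approx s/m$. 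Hence $G$ is dense and its complement $\bar G$, whose edges are exactly the coding pairs, is a sparse random-like graph on $N$ vertices.

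I would then exploit this sparsity to pin down the color classes. A first/second-moment computation shows that with probability $1-o(1)$ the sparse complement $\bar G$ is triangle-free, so every independent set of $G$ --- equivalently every color class --- has size at most two; the savings of \emph{any} coloring then equals the number of size-two classes, i.e. a matching in $\bar G$. The core step is that greedy coloring, in the given order, realizes a \emph{maximal} such family: when a vertex $v$ is processed it reuses an earlier color whenever it is non-adjacent in $G$ to some coloured vertex whose class it can join, and a maximal matching of $\bar G$ is within a factor two of the maximum. I would compute $\mathbb{E}[\text{savings}]$ from $q$ and $N$, then apply a bounded-differences (Azuma) martingale, exposing rankings WCS-by-WCS, to show the savings is at least $\tfrac{s}{\log s}$ with probability $1-o(1)$.

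The main obstacle, where most of the work lies, is that $G$ is \emph{not} a clean Erd\H{o}s--R\'enyi graph: the events ``$b_{j_1}\in S_{t-1,i_2}$'' are shared across all vertex-pairs involving $i_2$ and across all vertices requesting $b_{j_1}$, and vertices of the same WCS are always adjacent. These correlations couple the edges of $\bar G$, so neither triangle-freeness nor matching-size concentration follows from off-the-shelf $G(N,p)$ results; each must be re-derived with the dependency structure controlled (altering one WCS's ranking changes only its $O(\sqrt{c})$ vertices and their incident coding pairs, which bounds the martingale differences). Extracting precisely the $\tfrac{s}{\log s}$ savings, rather than a cruder polynomial bound, is the delicate endpoint; I would expect the $\log s$ factor to emerge from the union-bound / second-moment threshold at which the realized matching size is guaranteed to hold uniformly.
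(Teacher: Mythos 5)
There is a genuine gap, and it lies in the structural picture you paint of the conflict graph. Your route --- write the color count as $N$ minus the savings, identify savings with a matching in the complement $\bar G$, and justify the identification by showing $\bar G$ is sparse and triangle-free --- is incompatible with the parameter regime in which the theorem is actually proved (and in which its bound is non-vacuous). The paper's proof works under two implicit assumptions stated in its appendix: $m=\beta_1 n\sqrt{c}$ and $m=\beta_2 s$ with $\beta_1,\beta_2=\Theta(1)$. In that regime $s/m=1/\beta_2$ is a \emph{constant}, so your non-edge probability $q\approx (s/m)^2=\Theta(1)$: the complement $\bar G$ is dense, not sparse, and is saturated with triangles, so color classes are not limited to size two and the savings-equals-matching reduction collapses. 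Likewise your birthday-type claim that requests are distinct WHP fails, since $N\approx n\sqrt{c}=m/\beta_1$ requests are drawn from only $m$ files, producing $\Theta(N)$ collisions. Even if you retreated to a genuinely sparse regime $s/m\to 0$, triangle-freeness of $\bar G$ needs $q=o(N^{-1})$, i.e.\ roughly $s^2 n\sqrt{c}=o(m^2)$, an assumption available nowhere; and the delicate Azuma step you defer --- extracting savings of exactly $s/\log s$ from a correlated matching count --- is precisely the part you have not supplied.

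The paper avoids all of this machinery. It uses the elementary fact (its Remark 1) that greedy coloring needs at most $\Delta+1$ colors, and shows that WHP \emph{every} vertex $(1,j_{11})$ has at least $s/\log s$ non-neighbors among the $n\sqrt{c}$ potential vertices: a Chernoff bound shows that $\Theta(ns/m)$ WCSs cache file $j_{11}$, a hypergeometric tail estimate shows that at least $s/\log s$ of those WCSs' requested files land in $S_1$ (each such pair being a non-edge), and a union bound over the $\le n\sqrt{c}$ vertices finishes. Note that in this argument repeated requests and the correlations you worried about only \emph{help} (they are counted as non-edges, in the terms $L_1,L_2$ of the paper's decomposition), whereas in your argument they are obstructions that must be controlled. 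If you want to salvage your draft, the fix is to abandon the matching/triangle-freeness reduction and instead bound the minimum non-degree directly, which converts the problem into exactly the Chernoff-plus-hypergeometric calculation the paper performs.
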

For details of the proof, refer to the Appendix~\ref{app:thm3}.
From this theorem, we can see that the number of transmissions at each round is  proportional to the drifting speed parameter $\sqrt{c}$. We also see that there is a fraction of $\frac{s}{n\sqrt{c}\log(s)}$ transmission savings using the index coding compared with uncoded transmissions.

\section{Simulation}
The simulation setup is as follows. Each file starts with a popularity value randomly chosen in the range $[0, 1]$. The file popularity evolves over time. In each time slot, the popularity differs by a value randomly chosen in the range $[-p/2, p/2]$ compared to the popularity in the previous time slot. We call $p$ the drifting parameter, which will result in different Kendall tau distances. To capture the spatial popularity diversity, for each WCS, a $q \in [0, 1]$ fraction of randomly selected files follow a separate popularity dynamics. Therefore, if $q = 0$, then the popularity dynamics of all files are the same for all WCSs and if $q = 1$, the popularity dynamics of all files are different for all WCSs. 

Figs. \ref{performance_m} and \ref{performance_n} compare the performance of various transmission schemes when $s = 20$, $p = 0.1$ and $q = 0.2$. Fig. \ref{performance_m} investigates the impact of the number files by varying $m$ and fixing $n = 10$. Fig. \ref{performance_n} investigates the impact of the number of WCSs by varying $n$ and fixing $m = 100$. Each point is generated by running 200 time slots. As shown, coded transmission significantly reduces the number of transmissions compared to uncoded transmission: the proposed index coding-based scheme with degeneracy ordering performs the best, achieving up to 30\% bandwidth saving.

\begin{figure}
  \centering
  \includegraphics[width=0.35\textwidth]{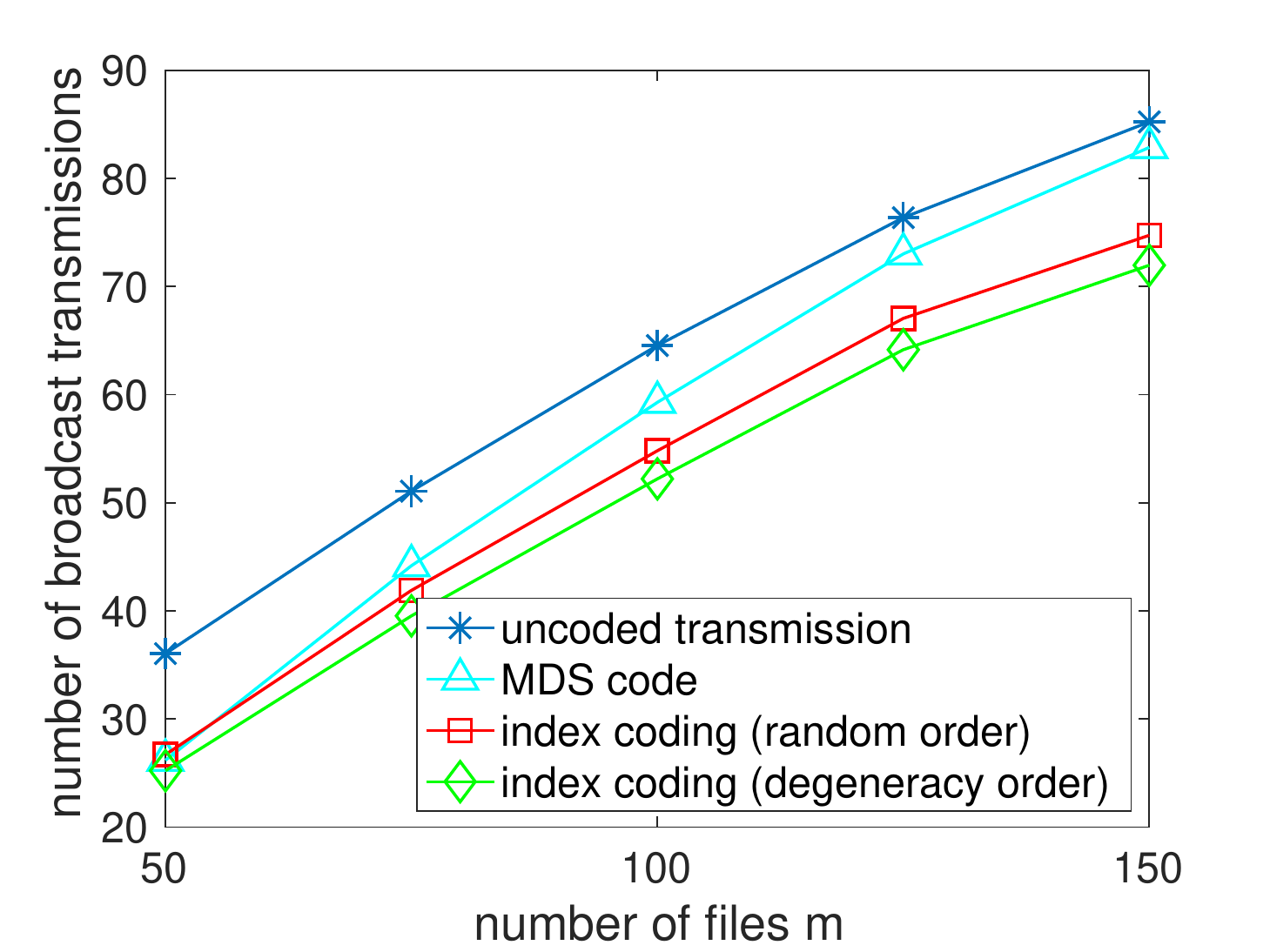}\\
  \caption{Performance comparison under different $m$}\label{performance_m}
  \vspace{-0.1in}
\end{figure}

\begin{figure}
  \centering
  \includegraphics[width=0.35\textwidth]{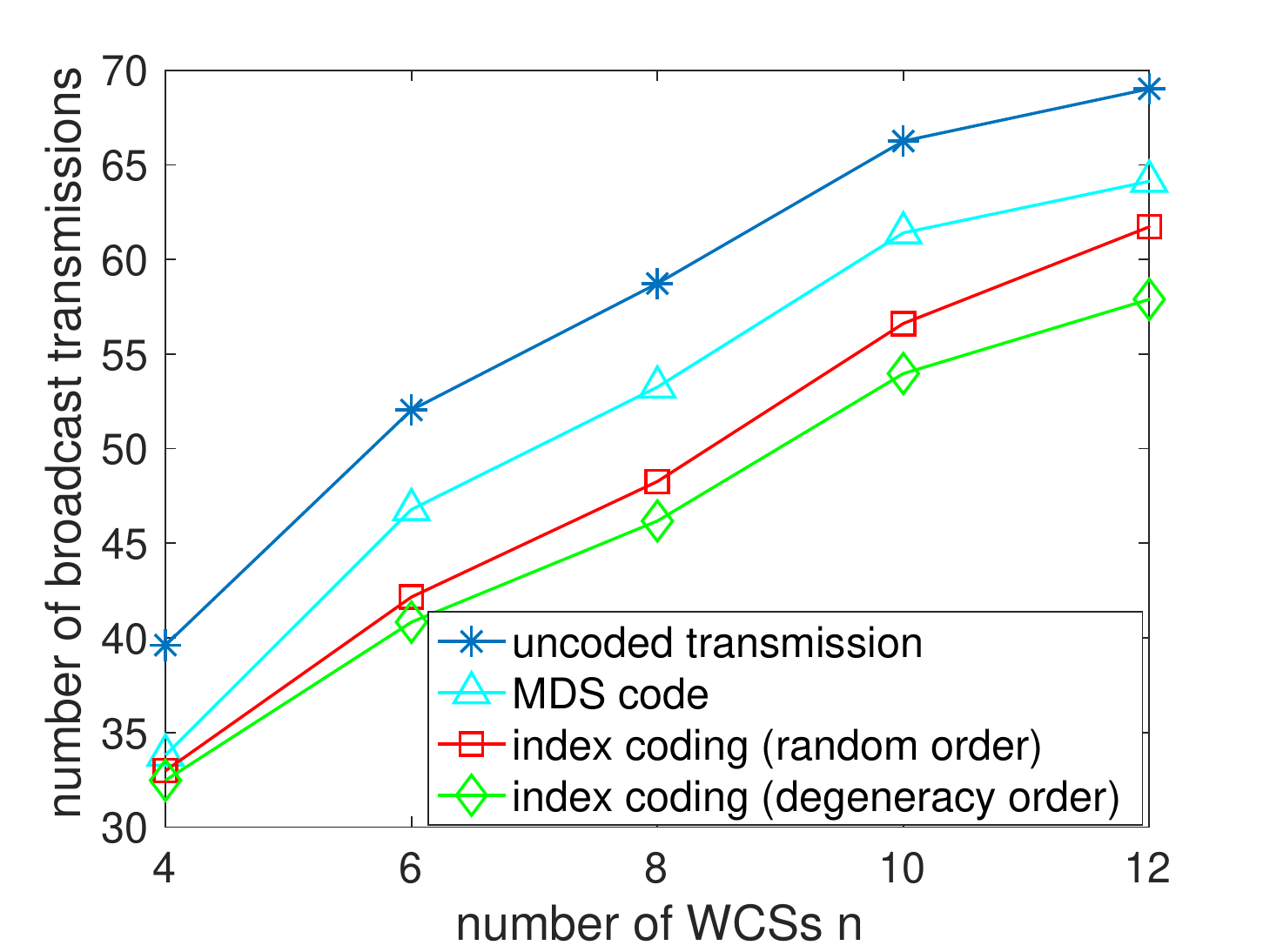}\\
  \caption{Performance comparison under different $n$}\label{performance_n}
\end{figure}

Figs. \ref{Kdist_drift} and \ref{performance_drift} illustrate the impact of popularity drifting on the system performance. Fig. \ref{Kdist_drift} shows the mean and standard deviation of the Kendall tau distance achieved under different drifting parameters $p$. Clearly, a larger $p$ results in a larger Kendall tau distance. Fig. \ref{performance_drift} shows the average number of transmissions by varying $p$. When $p$ is larger, popularity varies faster and hence more transmissions are needed to replace old files with new files. Again, the proposed index coding-based scheme with degeneracy ordering outperforms all other schemes in almost all cases. 

\begin{figure}
  \centering
  \includegraphics[width=0.35\textwidth]{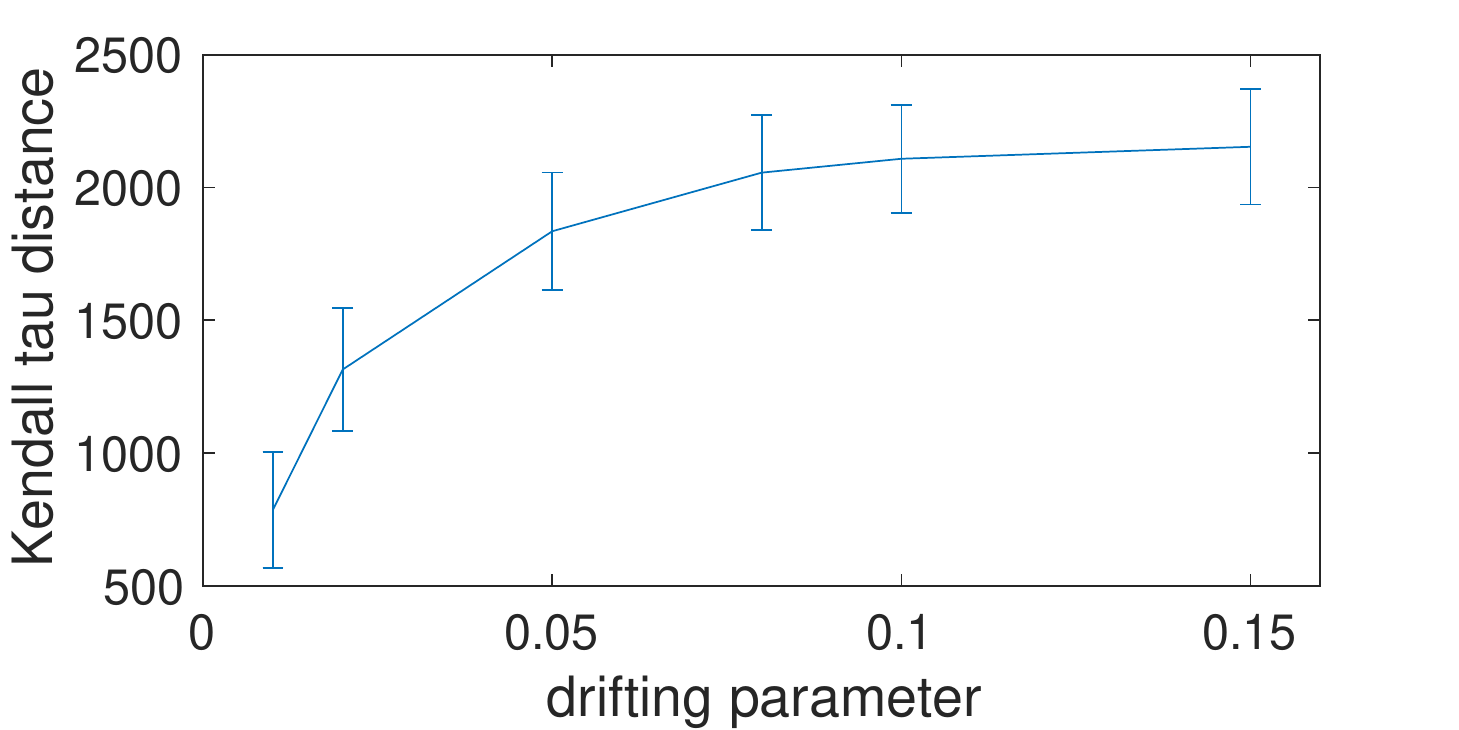}\\
  \caption{Drifting speed in terms of Kendall tau distance.}\label{Kdist_drift}
  \vspace{-0.1in}
\end{figure}

\begin{figure}
  \centering
  \includegraphics[width=0.35\textwidth]{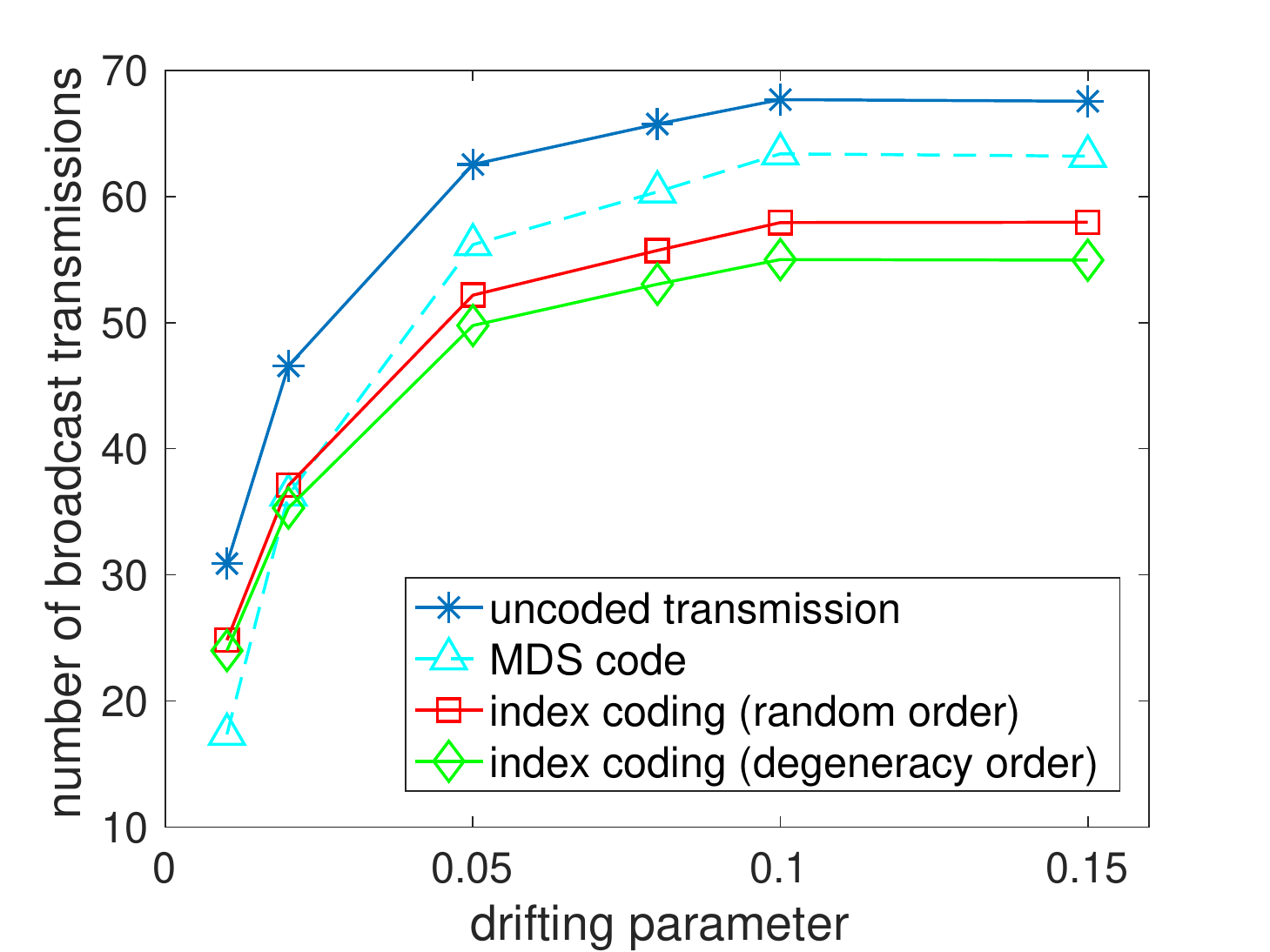}\\
  \caption{Performance comparison under different drifting speed}\label{performance_drift}
\end{figure}

\bibliographystyle{IEEEtran}
\bibliography{mybib}
\appendices
\section{Proof of Theorem 3}
\label{app:thm3}
In this appendix, we prove the relationship of the number of transmissions $T$ and the drifting speed $c$. Before describing the theorem and the proof, we first introduce two (implicit) assumptions.

We first describe a set of relationships between the parameters $m$, $n$, $c$, and $s$. Note that then notation $o(\cdot)$, $O(\cdot)$, $\Omega(\cdot)$ and $\Theta(\cdot)$ are associated with the above variables. For example, $f(m) = O(m)$ means that $f(m)/m \to C < \infty$, as $m \to \infty$.

$\bullet$ {\bf Assumption $1$}: we assume that the number of files $m$ and the number of total possible requests $n\sqrt{c}$ have the same order of magnitude. Formally, we assume $m = \beta_1 n\sqrt{c}$ for $1 \le \beta_1 = \Theta(1)$. This is a reasonable assumption, because if the total possible requests is too small, say $n\sqrt{c} = o(m)$, then this requires only a $o(m)$ number of transmissions, even if we just use uncoded transmissions.   

$\bullet$ {\bf Assumption $2$}: we assume that the number of files $m$ and the size of the caches are in the same order of magnitude. Formally, we assume that $m = \beta_2 s$ for $1 \le \beta_2 = \Theta(1)$. This assumption indicates that a fraction $1/\beta_2$ of files are cached in each WCS.

We thus reiterate Theorem~3 as follows.
\begin{theorem*}
If users coming to WCSs have randomly and independently distributed preference rankings across WCSs at the initial round and their preference rankings evolve independently and randomly (according to some drifting speed $c$) over rounds, then with high probability (i.e., $1-o(1)$, as $m,n,s$ tend to infinity), the number of transmissions achieved by the greedy coloring method at each round is upper bounded by $n\sqrt{c}(1-\frac{s}{n\sqrt{c}\log(s)})$.
\end{theorem*}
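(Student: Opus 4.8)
The plan is to treat the greedy coloring of the conflict graph $G$ as the coloring of a random graph and to show that so many requests can reuse colors that the number of colors drops well below the uncoded baseline $n\sqrt{c}$. First I would count vertices. Each vertex of $G$ is a (WCS, requested file) pair, so $G$ has $N=\sum_{i\in[n]}|R_{t,i}|$ vertices, and the number of transmissions equals the number of colors $K$ used by greedy. By Theorem 1, $|R_{t,i}|\le\sqrt{c}$ for every $i$, hence $N\le n\sqrt{c}$ — precisely the baseline in the claimed bound (and, up to the broadcast advantage, the uncoded cost). It therefore suffices to show $K\le n\sqrt{c}-\frac{s}{\log(s)}$ with probability $1-o(1)$, which I will obtain from the stronger estimate $K=O\!\left(n\sqrt{c}/\log(s)\right)$.

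Second, I would justify modeling $G$ as a random graph with constant compatibility probability. Two vertices $(i_1,j_1)$ and $(i_2,j_2)$ with $i_1\neq i_2$ are non-adjacent (can share a color) exactly when each requested file lies in the other WCS's side-information (previously cached) set. Since the rankings, and therefore the cached sets $S_{t-1,i}$, are drawn independently across WCSs, the probability that $j_1\in S_{t-1,i_2}$ and $j_2\in S_{t-1,i_1}$ is $\approx(s/m)^2$, which by Assumption 2 equals $1/\beta_2^{2}=\Theta(1)$; the same-file event contributes only a negligible $O(1/m)$ term. Thus $G$ behaves like a random graph on $N$ vertices in which each pair is compatible with a constant probability $q=\Theta(1)$.

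Third, I would estimate the color classes. A color class is a greedily grown set of mutually compatible requests; after $\ell$ members have been admitted, a further vertex joins only if it is compatible with all of them, an event of probability $\approx q^{\ell}$, so a class saturates once $\ell$ reaches $\Theta\!\left(\log_{1/q}N\right)=\Theta(\log(s))$ (here Assumptions 1--2 equate the orders of $m$, $n\sqrt{c}$ and $s$, fixing the logarithm's base and argument at $\Theta(\log s)$). Covering all $N\le n\sqrt{c}$ vertices by such classes therefore uses $K=O\!\left(n\sqrt{c}/\log(s)\right)$ colors; since $n\sqrt{c}=\Theta(s)$ under the assumptions, this is $o\!\left(n\sqrt{c}\right)$ and in particular at most $n\sqrt{c}-\frac{s}{\log(s)}$ for all large $s$, as claimed. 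Passing from these expected class sizes to a $1-o(1)$ statement needs a concentration argument (Chernoff/Azuma bounds exploiting the across-WCS independence).

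The main obstacle is the dependence structure underlying this random-graph reduction. The sets $R_{t,i}$ and $S_{t-1,i}$ are not independent uniform subsets: they are jointly determined by the random rankings, the drift dynamics, and the top-$s$ boundary, and within a single WCS the requested and cached files are complementary rather than independent. Establishing that the pairwise compatibility events are close enough to independent with the stated probability $q=\Theta(1)$, and that the class sizes concentrate so that greedy genuinely attains the $\Theta(\log s)$ saturation length, is the delicate step; pinning the saving at exactly $\frac{s}{\log(s)}$ — rather than $\log n$ or $\log(n\sqrt{c})$ — is where Assumptions 1 and 2 are essential.
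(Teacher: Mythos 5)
Your architecture is genuinely different from the paper's, and it aims at a much stronger conclusion than the theorem needs. The paper never analyzes color classes at all: it bounds the maximum degree of the conflict graph and invokes the greedy-coloring fact (its Remark 1) that at most $\Delta+1$ colors are used. Concretely, for a fixed vertex $(1,j_{11})$ it shows the number of non-neighbors exceeds $s/\log(s)$ WHP in two steps: a Chernoff bound certifies that at least $(n-1)s/(2m)$ of the independent caches $S_{t-1,i}$ contain $j_{11}$, and then, among the requests of those WCSs, the probability that fewer than $s/\log(s)$ of the requested files fall inside $S_{t-1,1}$ is a single hypergeometric tail, $\sum_{k=0}^{s/\log(s)}\binom{n_1}{k}\binom{n_2}{s-k}/\binom{n_1+n_2}{s}\le 4e^{-s(1/(8\beta_1\beta_2)-o(1))}$; a union bound over the at most $n\sqrt{c}$ vertices finishes. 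Your first two steps (the vertex count $N\le n\sqrt{c}$ via Theorem 1, and the pairwise compatibility probability $q\approx(s/m)^2=\Theta(1)$ with the reduction $K\le n\sqrt{c}-s/\log(s)$ under Assumptions 1--2) are sound and in fact mirror ingredients the paper also uses; where you diverge is in trying to prove $K=O(n\sqrt{c}/\log(s))$ by a Grimmett--McDiarmid-style class-growth analysis, which, if completed, would substantially sharpen the theorem.

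The genuine gap is exactly your third step, which is the entire load-bearing part of your route and is left heuristic. The conflict graph is not an Erd\H{o}s--R\'enyi graph: its edges are deterministic functions of the $n$ random size-$s$ caches, so every compatibility event involving WCS $i$ is coupled through the single set $S_{t-1,i}$ (sampling without replacement), and in greedy coloring the test of a vertex against class $k$ is conditioned on the outcomes of its tests against classes $1,\dots,k-1$, which consumed the same randomness. Even for honest $G(N,p)$ with independent edges, the statement that greedy uses $(1+o(1))N/\log_{1/q}N$ colors WHP requires a delicate martingale/second-moment argument with adaptive edge exposure; you cannot cite it off the shelf for this random-intersection-type graph, and your proposal explicitly defers the needed concentration step, so the proof does not close. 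The economical repair is to retreat to the paper's weaker target: your own estimate $q=\Theta(1)$ already gives each vertex an expected $\Theta(n\sqrt{c})\gg s/\log(s)$ non-neighbors, and the per-vertex concentration can be handled without any adaptive exposure --- a Chernoff bound across the independent caches of other WCSs plus one hypergeometric tail within the vertex's own cache --- which is precisely how the paper confines the dependence structure you correctly flag as the obstacle, at the cost of proving only the $n\sqrt{c}(1-\frac{s}{n\sqrt{c}\log(s)})$ bound rather than your stronger one.
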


\begin{proof}
To prove this theorem, we need to calculate that any of the vertices have degree at most $n\sqrt{c}(1-\frac{s}{n\sqrt{c}\log(s)})$ with high probability (WHP) i.e., $1-o(1)$ (in random graph theory, this is also called {\em almost surely}). 

Let us denote by $V$ the set of vertices and $|V| \le n\sqrt{c}$, since the number of vertices is at most the maximum possible requests of all WCSs. Let us denote by $d(i,j)$ the degree of vertex $(i,j) \in V$ that corresponds to WCS $i$ requesting file $j$. We next would like to show that 
\begin{equation}
\begin{array}{ll}
\Pr\{d(i,j) \ge n\sqrt{c}(1-\frac{s}{n\sqrt{c}\log(s)}), \forall (i,j) \in V\} \\
= o(1).
\end{array}
\end{equation}

Due to symmetry, we can bound the above probability by 
\begin{equation}
\begin{array}{ll}
\Pr\{d(i,j) \ge n\sqrt{c}(1-\frac{s}{n\sqrt{c}\log(s)}), \forall (i,j) \in V\} \\
\le |V|\Pr\{d(1,j_{11}) \ge n\sqrt{c}(1-\frac{s}{n\sqrt{c}\log(s)})\}.
\end{array}
\end{equation}

Hence, we only need to calculate the probability that the degree of a specified vertex $d(1,j_{11})$ is above $n\sqrt{c}(1-\frac{s}{n\sqrt{c}\log(s)})$. Or equivalently, we denote by $L$ the {\em non-connection degree} of vertex $d(1,j_{11})$ with respect to the maximum possible $n\sqrt{c}$ vertices. Formally, we count $L = L_1+L_2+L_3$ in the following three cases:

$\bullet$ If the number of vertices $|V|$ is less than the maximum possible number of $n\sqrt{c}$, we count the difference as $L_1$, i.e., $L_1 = n\sqrt{c} -|V|$.

$\bullet$ If a vertex $(i,j) \in V$ requests the same file as $(1,j_11)$, i.e., $j = j_{11}$, then we count these number of vertices as $L_2$. Obviously, there is no edge between such a vertex $(i,j)$ and $(1,j_{11})$ according to our index coding conflict graph construction.

$\bullet$ If a vertex $(i,j) \in V$ and the vertex $(1,j_{11})$ have the following caching pattern: $j \in S_{1}$ and $j_{11} \in S_{i}$, then we count these number of vertices $(i,j)$ as $L_3$.

Obviously, the degree of $(1,j_{11})$ is $n\sqrt{c} - L$, then we only need to show that $L > s /\log(s)$ WHP. And it suffices for us to show that $L \le s/\log(s)$ with probability $o(1)$.

To see this, we define the following events $E_2 \triangleq \{j_{11} \in S_2\}$, $E_3 \triangleq \{j_{11} \in S_3\}$, $\ldots$, $E_n \triangleq \{j_{11} \in S_n\}$. We also define the random variables $X_2 \triangleq I_{\{E_2\}}$, $X_3 \triangleq I_{\{E_3\}}$, $\ldots$, $X_n \triangleq I_{\{E_n\}}$, where $I_{\{\}}$ is the indicator function. We let $X = X_2 + X_3 + \ldots + X_n$. Because of the independent preference rankings across WCSs, we can see that the random variables $X_i$, $i = 2, 3, \ldots, n$, are linearly independent. We thus calculate the expectation of $X$ as follows.
\begin{equation}
\mathbb{E}X = (n-1) \Pr\{E_2\} = (n-1)s/m.
\end{equation}
Using Chernoff bound, we can bound the probability that $X \le \mathbb{E}X/2$.
\begin{equation}
\Pr\{X \le \mathbb{E}X/2 \} \le e^{-\frac{(1/2)^2(n-1)s}{2m}} = e^{-\frac{(n-1)s}{8m}}.
\end{equation}

Let us denote by $V_1$ the subset of vertices that correspond to WCSs with $X_i = 1$, i.e., $V_1 = \{(i,j) \in V | X_i = 1, i \in [2:n]\}$, where $[2:n]$ denotes the set $\{2,3,\ldots,n\}$. We then need some manipulation of $V_1$. If $|V_1| < X \sqrt{c}$, then we can add some dummy vertices into $V_1$ and these dummy vertices count for $L_1$ based on our counting of $L$. If some $(i,j) \in V_1$ has the same request as $(1,j_{11})$, i.e., $j = j_{11}$, we can also replace these vertices by dummy vertices, since these vertices also counts for $L$. Hence, we can see that the worst case is that $|V_1| = X \sqrt{c}$ and the requested files are different, i.e., $|\{j|(i,j) \in V_1 \text{for some $i$}\}| = |V_1|$. Define $n_1 = \frac{(n-1)s\sqrt{c}}{2m}$ and $n_2 = m - r_1 - n_1$. Then we can see that
\begin{equation}
\begin{array}{ll}
\Pr\{d(1,j_{11}) \ge n\sqrt{c}(1-\frac{s}{n\sqrt{c}\log(s)})\} \\
\le e^{-\frac{(n-1)s}{8m}} + \sum_{k = 0}^{s/\log(s)}\frac{{n_1 \choose k}{n_2 \choose s-k}}{{n_1 + n_2 \choose s}},
\end{array}
\end{equation}
where the first term in the last expression corresponds to the probability that $X \le \mathbb{E}X/2$ and the second term corresponds to the worst case (i.e., $|V_1| = X \sqrt{c} = |\{j|(i,j) \in V_1 \text{for some $i$}\}|$) probability that for the $V_1$ vertices, the non-connection degree of $(1,j_{11})$ is below $s/\log(s)$.

We then calculate that 
\begin{equation}
\begin{array}{ll}
\sum_{k = 0}^{s/\log(s)}\frac{{n_1 \choose k}{n_2 \choose s-k}}{{n_1 + n_2 \choose s}} \\
\le \sum_{k = 0}^{s/\log(s)}\frac{n_1^k n_2^{(s-k)} 4s!}{k!(s-k)!(n_1 + n_2)^s} \\
\le \frac{4n_2^s}{(n_1+n_2)^s}\sum_{k = 0}^{s/\log(s)} {s \choose k} \\
\le \frac{4n_2^s}{(n_1+n_2)^s} (\frac{es}{s/\log(s)})^{s/\log(s)} \\
\le 4e^{s\log(1-\frac{n_1}{n_1+n_2}) + \frac{s}{log(s)}(1+\log\log(s))} \\
\le 4 e^{-s(\frac{1}{8\beta_1\beta_2}-o(1))} 
\end{array}
\end{equation}
where the first inequality comes from $\frac{x^y}{4y!} \le {x \choose y} \le \frac{x^y}{y!}$; the second inequality follows from that $n_1 = \frac{(n-1)s\sqrt{c}}{2m} \le \frac{ns\sqrt{c}}{2m} = \frac{m}{2 \beta_1 \beta_2} \le (n_1+n_2)/2 \le n_2$; the third inequality follows from that $\sum_{k = 0}^{d} {s \choose k} \le (es/d)^d$; and the last inequality follows from that $\log(1-x) \le -\frac{1}{2}x$ for $x \le 0.2$.

We then use the union bound to bound the probability that no vertices have degree larger than $n\sqrt{c}(1-\frac{s}{n\sqrt{c}\log(s)})$.
\begin{equation}
\begin{array}{ll}
\Pr\{d(i,j) \ge n\sqrt{c}(1-\frac{s}{n\sqrt{c}\log(s)}), \forall (i,j) \in V\} \\
\le n\sqrt{c} [e^{-\frac{(n-1)s}{8m}}+4 e^{-s(\frac{1}{8\beta_1\beta_2}-o(1))}] \\
\le e^{-n(\frac{1}{16\beta_2}-o(1))} + e^{-s(\frac{1}{8\beta_1\beta_2}-o(1))} = o(1).
\end{array}
\end{equation}
Then the result follows from that the number of colors in greedy coloring method is upper bounded by $n\sqrt{c}(1-\frac{s}{n\sqrt{c}\log(s)}) + 1 \approx n\sqrt{c}(1-\frac{s}{n\sqrt{c}\log(s)})$.
\end{proof}

\end{document}